\documentclass[runningheads]{article}

\usepackage{amsmath,amsfonts,amscd,upref,amstext,amssymb}
\usepackage{amsthm}
\usepackage{tikz}
\usepackage{scalefnt}
\usetikzlibrary{trees}
\usetikzlibrary{positioning,quotes}
\usepackage{cancel}
\usepackage[colorlinks=true,urlcolor=blue, citecolor=red,linkcolor=blue,linktocpage,pdfpagelabels, bookmarksnumbered,bookmarksopen]{hyperref}
\usepackage[hyperpageref]{backref}
\usepackage[noabbrev]{cleveref}
\usepackage{comment}
\usepackage[shortlabels]{enumitem}
\usepackage{mdframed}
\usepackage[margin=1in]{geometry}
\usepackage{algpseudocodex}
\usepackage{algorithm}
\usepackage{comment}

\allowdisplaybreaks

\newcommand{\cut}{\text{Cut}}

\DeclareMathOperator*{\argmax}{arg\,max}
\DeclareMathOperator*{\argmin}{arg\,min}

\newtheorem{observation}{Observation}

\newtheorem{theorem}{Theorem}
\newtheorem{lemma}[theorem]{Lemma}

\newtheorem{definition}[theorem]{Definition}

\usepackage{xcolor}  

\usepackage[colorinlistoftodos,prependcaption,textsize=tiny]{todonotes}

\title{No, Cake Cutting  Really is a Piece of Cake}

\author{Stephen Arndt\thanks{Carnegie Mellon University, Tepper School of Business. } \and Benjamin Moseley\thanks{Carnegie Mellon University, Tepper School of Business. }\and Sungjin Im\thanks{University of California, Santa Cruz. Supported in part by NSF grant CCF-2423106.}  \and Kirk Pruhs\thanks{University of Pittsburgh. Supported in part by NSF grant CCF-2209654.} }

\begin{document}
\maketitle
\begin{abstract}
   We design and analyze a deterministic cake cutting algorithm that achieves proportional fairness using a linear number of cuts. The best previous  upper bound on the number of cuts for a deterministic algorithm was $O(n \log n)$, 
   which was obtained by a natural divide-and-conquer algorithm  due to Even and Paz. It has long been conjectured  
   that $O(n \log n)$ cuts was optimal for a deterministic algorithm. 
\end{abstract}

\section{Introduction}\label{sec:intro}

\subsection{Problem Statement}

Intuitively cake cutting problems involve $n$ self-interested players who want
to divide a divisible resource, such as a cake. Here we consider achieving the objective of proportional
fairness in the standard Robertson-Webb 
model~\cite{RobertsonWebbbook,JiriGerhardComplexity}. 
So formally an instance of the cake cutting problem consists of $n$ value functions
$\mu_1, \ldots, \mu_n$, where each $\mu_p$ is from the unit interval $(0, 1)$ to  the
nonnegative reals, and where $\int_{x=0}^1 \mu_p(x) dx = 1$. Intuitively 
the unit interval is the cake, and
$V_p(a, b) = \int_{x=a}^b \mu_p(x) dx $ is how much player $p$ values the portion $(a, b)$ of the cake. 
A feasible solution is a partition of the cake, and a proportionally fair assignment
of the pieces to the players, 
that is each player must be assigned a part of the cake that is of value at least $\frac{1}{n}$ to them. 
Each part in the standard algorithms is a subinterval, but this is not a requirement of the problem, and more generally a
part could be the union of subintervals. 
The algorithm does not initially know the value functions, and can only learn information
about the value functions by querying the players. The most important type of query is the
Cut query, which is of the form $\cut_p(\alpha)$, where $p$ is a player and
$\alpha \in [0, 1]$ is a value. 
The expected response from  $p$ is 
the cut point $y$ such that $V_p(0, y) = \alpha$. 
The second type of query is 
an evaluation query $E_p(y)$, which consists of a player $p$ and a point $y$ that was some player's (not necessarily player $p$) response
to a prior cut query, and 
the expected response from player $p$ is $V_p(0, y)$. 
Initially it will be convenient to assume that all players always answer truthfully, and that
for all players $p$ and positions $a$ and $b$ it is the case that $V_p(a, b) > 0$.

\subsection{Essential Background}

The cake cutting literature is quite large, including several books and surveys dedicated to the topic~\cite{RobertsonWebbbook,Procaccia_2016,Brams_Taylor_1996}.
Here we just cover the results that are most essential for our purposes, starting with the standard algorithms for proportional fairness. 

In the 1940s Banach and Knaster designed a deterministic algorithm, sometimes called the
Last Diminisher algorithm, that
uses $O(n^2)$ cuts, and this algorithm  was communicated by Steinhaus
in 1948~\cite{steinhaus}. This algorithm uses $n$ cuts to find the player $p$ that most values the left portion of the cake,
more precisely $p = \argmin_q \cut_q( \frac{1}{n})$.
Player $p$ is then assigned the piece $\left(0, \cut_p\left(\frac{1}{n}\right)\right)$ of the cake.
The algorithm then recurses on the rest of the players and the remaining unassigned portion of the cake. 
Steinhaus~\cite{steinhaus} noted that:
\begin{quote}
``Interesting mathematics arise if we are to determine the minimal number of cuts necessary for fair division.''
\end{quote}
In 1984 Even and Paz designed a deterministic divide-and-conquer algorithm 
that
uses $O(n \log n)$ cuts~\cite{EvenPaz}.
Even and Paz's algorithm uses $n$ cuts to determine a split position $y$, and the  $n/2$ players $L$ that will appear
in the first half in the left-to-right order of the pieces, and the  $n/2$ players $R$ that will appear
in the second half of  the left-to-right order of the pieces. The algorithm then recurses to determine how
the players in $L$ divide the portion $(0, y)$ of the cake and recurses to determine how the players in 
$R$ divide the portion $(y, 1)$ of the cake. 
In this same paper~\cite{EvenPaz} Even and Paz also designed a randomized
 divide-and-conquer algorithm that uses $O( n)$ cuts in expectation.
The algorithm uses $O(\log n)$ cuts  in expectation
to determine the split position $y$, and then proceeds in the same manner as the deterministic algorithm.
The key to the correctness proofs for the Last Diminisher algorithm, and both of Even and Paz's algorithms, is that all recursive subproblems
have the following abundance condition: 
\begin{description}
    \item[Abundance condition:] If a subproblem involves $k$ players dividing a portion of the cake, 
then each player values that portion at least $\frac{k}{n}$.
\end{description}

The most straightforward implementation of  the Last Diminisher algorithm can be implemented in the Robertson-Webb model using only cut operations (so no evaluation operations).
Even and Paz~\cite{EvenPaz}  didn't bother to evaluate the  number of evaluations in their algorithms,
as the general sentiment of researchers working on cake cutting was that the primary objective
was to limit the number of cuts. In the chapter entitled ``Pieces or Crumbs --- How Many Cuts are Needed?'', of their  1998 textbook ``Cake-Cutting Algorithms: Be Fair if you Can''~\cite{RobertsonWebbbook},  Robertson and Webb 
stated the case for this sentiment as follows:
\begin{quote}
    ``Certainly if you are actually dividing a cake, you don't want to receive a pile of crumbs for your share.
    You prefer not to make too many cuts. More realistically, the issue at hand is a mathematical one.''
\end{quote}
The  most straightforward implementation of Even and Paz's deterministic algorithm would use 
$O(n \log n)$ evaluations, but it is also implementable with no evaluations. 
The  most straightforward implementation of Even and Paz's randomized algorithm would use 
$O(n \log^2 n)$ evaluations in expectation.

Subsequently  the central open question in this area became whether there was a deterministic algorithm 
that uses $O(n)$ cuts (and an arbitrary number of evaluation queries). In 1984 Even and Paz~\cite{EvenPaz} stated:
\begin{quote}
``We conjecture that no fair deterministic protocol exists in which the number of cuts is $O(n)$.''
\end{quote}
In their 1998 textbook~\cite{RobertsonWebbbook}  Robertson and Webb 
doubled down with an even stronger  conjecture,
saying that
\begin{quote}
   ``We have lost bets before, but if we were asked to gaze into a crystal ball, we would place our money against finding a substantial improvement on the 
   $ n \log n$  bound.''
\end{quote}

Efforts to prove the non-existence of an $O(n)$ cut deterministic 
algorithm began in the 1990's, with the necessary first
step being defining a formal model, which wasn't strictly necessary for the
prior upper bound results in the literature. 
An initial attempt at defining a formal model was made in the 1990's by Robertson and 
Webb~\cite{RobertsonWebb1995,RobertsonWebbbook}, and this was later refined in the 2000's by Sgall and Woeginger~\cite{JiriGerhardComplexity}.
This model was called the Robertson-Webb model in \cite{JiriGerhardComplexity}. 
With the Robertson-Webb model established, 
Sgall and Woeginer~\cite{JiriGerhardComplexity} gave an $\Omega(n \log n)$ lower bound on the number
of cut queries plus the number of evaluation queries for any deterministic protocol 
that assigns the players  pieces that are subintervals of the cake. 
Soon after, Edmonds and Pruhs~\cite{EdmondsPruhs} extended Sgall and Woeginer's $\Omega(n \log n)$ lower bound  to apply to deterministic protocols in which the parts may be a collection of subintervals. 
If one takes the view that cut queries and evaluation queries are roughly 
equally valuable/limiting resources, then this closes the story for deterministic algorithms,
as $\Theta(n \log n)$ queries are necessary and sufficient to achieve proportional fairness
in the Robertson-Webb model. If instead one takes the view of Even and Paz~\cite{EvenPaz}, and Robertson and Webb~\cite{RobertsonWebbbook},
 and that the primary objective should be to minimize the number of
cut queries, then the lower bounds of Sgall and Woeginger~\cite{JiriGerhardComplexity}, and Edmonds and Pruhs~\cite{EdmondsPruhs},   don't directly address whether there
is an $O(n)$ cut deterministic algorithm.

\subsection{Our Results}

Our main contribution is to refute the conjecture of Even and Paz~\cite{EvenPaz}, and collect on the bet of  Robertson and Webb~\cite{RobertsonWebbbook}, by giving a deterministic
algorithm that achieves proportional fairness using $O(n)$ cuts. 

\begin{theorem}\label{thm:main}
    There is a deterministic algorithm in the Robertson-Webb cake cutting model, which we call the Late-Early algorithm, that achieves proportional
    fairness using $O(n)$ cuts
    and $O(n^2)$ evaluations. 
\end{theorem}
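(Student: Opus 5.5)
The plan is to build a sweep-based variant of the Last Diminisher algorithm in which cut queries are rationed and evaluation queries do the bookkeeping. The algorithm keeps a current unassigned interval $(\ell, r)$ and a set $P$ of $k$ unserved players, and maintains the abundance condition: every $q \in P$ has $V_q(\ell, r) \ge k/n$. Each player $q\in P$ also holds a \emph{mark}, meaning a point that $q$ produced with a single cut query. After every new cut, all players in $P$ are asked an evaluation query at that point. This uses $O(n)$ evaluations per cut, so $O(n)$ cuts give $O(n^2)$ evaluations, which matches the bound in Theorem~\ref{thm:main}. These evaluations tell the algorithm, for every pair of existing cut points, how much every remaining player values the interval between them.

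The allocation rule is as follows. At each step the algorithm scans the known cut points $x$ inside $(\ell,r)$ and looks for a player who can be served cheaply. One option is an \emph{early} piece $(\ell,x)$: this is allowed if some $q\in P$ has $V_q(\ell,x)\ge 1/n$ while every other remaining player values it at most $1/n$. The other option is a \emph{late} piece $(x,r)$ with the symmetric property. If such a piece exists, it is assigned and the boundary is moved without issuing any new cut. By the usual Last Diminisher argument, the abundance condition is preserved, since every remaining player loses at most $1/n$ of value. Only when no such piece can be read off the existing points does the algorithm issue fresh cuts. In that case it asks a carefully chosen small set of players, namely those whose marks are now obsolete because they fell outside $(\ell,r)$, to cut at $1/n$ of their value measured from $\ell$ (early side) or from $r$ (late side). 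Working from both ends is what should let a stale mark from one end remain useful at the other end.

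The key step, and the one I expect to be the main obstacle, is the amortized bound showing that the total number of cut queries is $O(n)$. I would use a potential function: each player receives $O(1)$ credits when it enters, and each assigned piece pays $O(1)$. I then need to show that every cut query either (a) becomes the endpoint of a piece that is assigned, of which there are only $n-1$, or (b) is paid for by a player whose previous mark was consumed, which happens at most a constant number of times per player. The hard case is a mark that becomes obsolete without ever being used. I would try to exclude it with an exchange argument: if player $q$'s mark is skipped on the early side, then $q$ values the assigned early piece at less than $1/n$, so $q$'s relative value of the remainder only grows. In that situation $q$'s mark, read from the late side via evaluations, should still certify a valid late piece. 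I expect the main technical work to be choosing exactly which players recut, early versus late, so that this charging goes through.

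Correctness then follows by induction on $k$ from the abundance invariant. When $k=1$, the last player receives $(\ell,r)$, which it values at least $1/n$. Every earlier player received a piece it certified through an evaluation to be worth at least $1/n$. Combined with the amortized bound on cuts and the $O(n)$ evaluations per cut, this yields the theorem.
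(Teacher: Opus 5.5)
Your correctness argument is sound. Peeling an end piece that one player values at least $\frac{1}{n}$ and every other remaining player values at most $\frac{1}{n}$ preserves abundance exactly as in Last Diminisher, and $O(n)$ evaluations per cut would give $O(n^2)$ evaluations. The gap is that the whole content of the theorem is the $O(n)$ bound on cuts, which is precisely the step you defer, and the charging scheme you sketch does not go through.

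First, marks do not go stale only $O(1)$ times per player. When $(\ell,x)$ is peeled, every other player $q$ whose mark $m_q$ was cut at $\frac{1}{n}$ from $\ell$ has $m_q\ge x$. So the mark stays \emph{inside} the new interval, and your trigger of marks falling outside $(\ell,r)$ never fires. Yet $q$ now values $(x,m_q)$ at only $\frac{1}{n}-V_q(\ell,x)$. In near-uniform instances every old mark lies to the left of every player's new $\frac{1}{n}$-point and is useless. This recurs at every peel, so a player may need a fresh cut $\Theta(n)$ times.

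Second, your proposed rescue, reading a stale early mark from the late side, fails outright. In such instances $m_q$ is within a tiny distance of the new left end, so $(m_q,r)$ is essentially the whole remaining interval. It is worth about $\frac{k}{n}>\frac{1}{n}$ to every remaining player and cannot be a late piece.

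Third, within a single peel, a new cut point plus evaluations only reveals which players' $\frac{1}{n}$-points lie to its left. If candidates are asked to cut at $\frac{1}{n}$ from the current end, an adversary can make each queried point the largest among the surviving candidates, so each cut eliminates one candidate. Nothing in your scheme rules out $\Theta(k)$ cuts per peel and $\Theta(n^2)$ cuts overall, which is no better than Last Diminisher.

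The missing idea is to give up both abundance and one-piece-at-a-time peeling. The paper peels an end piece only in cheap cases: one of two fixed players $p,q\in T$ is early in $\prec_1$ or late in $\prec_{k-1}$, and each such test costs one cut and $O(k)$ evaluations. Otherwise it runs a doubling search with $O(\log\min(j,k-j))$ cuts to find $j$ such that one of $p,q$ is late in $\prec_j$ and both are early in $\prec_{j+1}$. It then cuts at $\max(y_{p,j},y_{q,j})$ and splits the players, via the classes $L,R,I,O$, into two subinstances with $j$ and $k-j$ players.

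These subinstances need not be abundant. They do satisfy the weaker subsistence condition: there are points $z_r$ with $V_r(z_r,d)\ge\frac{k-1}{n}$ for all players, $x_r\le z_r$ for $r\in T$, and $\max_{r\in T}z_r\le\min_{r\in U}z_r$. Every recursive call preserves this condition, and it guarantees feasibility. Linearity then comes from the recurrence $C(k)\le C(j)+C(k-j)+O(\log\min(j,k-j))$. The cost of each split is logarithmic in the smaller side, and this sums to $O(n)$ however unbalanced the splits are; no amortized charging over a sweep is involved.
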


In Section \ref{sec:story}, we tell the story behind the design and analysis of the Late-Early algorithm, as
we believe that it is interesting. This section also serves as an overview for readers seeking a high-level overview before 
diving into the details of the proof of 
 Theorem \ref{thm:main}, which  we give in Section \ref{section:earlylate}.
To give some more context to    Theorem \ref{thm:main}, in Section \ref{sec:nonessential} we highlight a few other results in 
the voluminous literature on cake cutting and fair division.

\section{The Story of the Algorithm Development }
\label{sec:story}

As intuitively the hardest instances of finding one of $k$ needles in a haystack are those with $k=1$, it is natural to conjecture that the hardest instances of cake cutting are those with (essentially) a unique solution. This intuition underlies the $\Omega(n\log n)$ lower bound of \cite{JiriGerhardComplexity} for cuts plus evaluations. This lower bound considers a collection $\mathcal I=\{I_\pi\}$ of instances, one for each permutation $\pi:[n]\to[n]$. Each instance $I_\pi$ has (essentially) a unique solution $S_\pi$ in which every player receives a part that is a subinterval, and the left-to-right order of subintervals is given by the permutation $\pi$. Specifically, player $p$ is assigned the $\pi(p)$'th subinterval on the cake. This lower bound gives an adversarial strategy showing that, for every deterministic algorithm $A$, there is an instance $I_\pi$ on which either $A$ is incorrect on $I_\pi$, or $A$ uses $\Omega(n\log n)$ queries.

Our original goal was to prove Even and Paz's conjecture that there is no deterministic algorithm that uses $O(n)$ cuts. We therefore attempted to prove such a lower bound for algorithms that assign subintervals using the instances $\mathcal I$. Since \cite{JiriGerhardComplexity} show that one may assume all cut queries are at integer multiples of $\frac1n$, our first observation was an $\Omega(n\log n)$ lower bound for position-oriented deterministic algorithms. Intuitively, an algorithm $A$ is position-oriented if, for every $j$, it is the case that all cut queries of value $\frac{j}{n}$ occur consecutively in the execution of $A$, so the algorithm $A$ completely explores one position for the players before moving to another. The algorithms of Steinhaus~\cite{steinhaus} and Even and Paz~\cite{EvenPaz} are position-oriented. We suspect this observation may already have been known, and may even in part have motivated the conjecture that there is no $O(n)$ cut deterministic algorithm. However, we found that the argument did not extend to more player-oriented algorithms, which intuitively are algorithms whose cuts can  be partitioned into consecutive blocks involving a single player (instead of blocks involving a single position, as in the position-oriented algorithms).

Thus, guided by our failure to extend our lower bound for position-oriented algorithms to player-oriented algorithms,
we switched roles, and attempted to find a deterministic  algorithm that was more player-oriented. 
So such an algorithm would fix a player  $p$, and then ask player $p$ a sequence of cut
queries until some useful information is gained from $p$. 
A piece of information that would be sufficient is identifying a ``late-early'' piece with no inside players. Formally, define $y_{p,j}$ as the point on the cake such that the value to the right of this point is $\frac{n-j}{n}$ according to player $p$. 
A late-early piece is  a subinterval of the form $\left(y_{p, j}, y_{p, j+1}\right)$, where $y_{p, j}$ is ``late'' in the multiset $\{y_{q,j} \}$ (meaning $y_{p, j}$ is not one of the smallest $j$ points in $\{y_{q,j} \}$) 
and $y_{p, j+1}$ is ``early'' in the multiset $\{y_{q,j+1} \}$ (meaning $y_{p, j+1}$ is one of the smallest $j+1$ points in $\{y_{q,j+1}\}$). 
Further, a late-early piece  $\left(y_{p, j}, y_{p, j+1}\right)$ can be found in $O(\log \min(j, n-j))$ cuts using a standard doubling search.
An inside player for a piece $(y_{p, j}, y_{p, j+1})$ is a player $q$ such that the interval $\left(y_{q, j}, y_{q, j+1}\right) $ is contained in the interval $ \left(y_{p, j}, y_{p, j+1}\right)$. See Figure \ref{fig:lateearlysimple} for an illustration. 

\begin{figure}[h]
\begin{center}
\begin{tikzpicture}
\filldraw[black] (0,0) circle (2pt) node[anchor=north]{$y_{p,j}$};
\draw[black, thick] (0,0) -- (6,0);
\filldraw[black] (6,0) circle (2pt) node[anchor=north]{$y_{p,j+1}$};
\filldraw[black] (-2,-1) circle (2pt) node[anchor=north]{$y_{s,j}$};
\draw[black, thick] (-2,-1) -- (4,-1);
\filldraw[black] (4,-1) circle (2pt) node[anchor=north]{$y_{s,j+1}$};
\filldraw[black] (2,-2) circle (2pt) node[anchor=north]{$y_{r,j}$};
\draw[black, thick] (2,-2) -- (8,-2);
\filldraw[black] (8,-2) circle (2pt) node[anchor=north]{$y_{r,j+1}$} ;
\filldraw[black] (2,-3) circle (2pt) node[anchor=north]{$y_{q,j}$};
\draw[black, thick] (2,-3) -- (4,-3);
\filldraw[black] (4,-3) circle (2pt) node[anchor=north]{$y_{q,j+1}$};
\filldraw[black] (-2,-4) circle (2pt) node[anchor=north]{$y_{t,j}$};
\draw[black, thick] (-2,-4) -- (8,-4);
\filldraw[black] (8,-4) circle (2pt) node[anchor=north]{$y_{t,j+1}$} ;
\end{tikzpicture}
\end{center}
\caption{In this instance player $q$ is an inside player for the piece $(y_{p,j}, y_{p, j+1})$, but none of players $r$, $s$, or $t$ are inside players for the piece $(y_{p,j}, y_{p, j+1})$. In this instance, player $t$ is an outside player for the piece $(y_{p,j}, y_{p, j+1})$, but none of players $q$, $r$, or $s$ are outside players for the piece $(y_{p,j}, y_{p, j+1})$.  }
\label{fig:lateearlysimple}
\end{figure}

If the algorithm could identify a late-early piece $\left(y_{p, j}, y_{p, j+1}\right)$ with no inside players, then the algorithm could assign the player $p$ the late-early piece $\left(y_{p,j}, y_{p,j+1}\right)$, then recurse on 
$(0, y_{p,j})$ with  $j$  players, and recurse on $( y_{p,j+1}, 1 )$ with  $n-j-1$ players. For brevity we omit the details of how the players are partitioned, but the details are relatively simple to work out; as a warm-up, observe that for  the corner  case that 
 $y_{p, j}$ is last in  $\{y_{q,j} \}$ and first in $\{y_{q,j+1} \}$, the $j$ players recursing left can be selected arbitrary. 
For the lower bound instances $I_\pi$ from \cite{JiriGerhardComplexity},  a late-early piece $\left(y_{p, j}, y_{p, j+1}\right)$  without inside players must exist.
Further, it can be found in $O(\log \min(j, n-j))$ cuts using a standard doubling search.
This yields a deterministic algorithm for the instances $I_\pi$ from \cite{JiriGerhardComplexity} 
that use $O(n)$ cuts. 
The correctness of the algorithm follows from the fact that both recursive subproblems satisfy the abundance condition.

Now this cannot directly lead to an algorithm on general instances, because it can be the case that all of the late-early pieces  contain inside players. Further one can see that in these instances there is no way, using only local information, to determine left and right subproblems that will be guaranteed to have the abundance condition.  Thus at first glance it might appear that proceeding recursively cannot work as 
the abundance condition seems critical to guaranteeing that 
the recursive subproblems admit a feasible solution. 

But after some reflection, one can observe that for every inside player $q$ for the late-early piece  $ \left(y_{t, j}, y_{t, j+1}\right)$,
there is an outside player $t$. Player $t$ is
an outside player  for a late-early piece  $ \left(y_{t, j}, y_{t, j+1}\right)$ if the
interval $\left(y_{p, j}, y_{p, j+1}\right) $ is contained in the interval $ \left(y_{t, j}, y_{t, j+1}\right)$. Intuitively outside players are agnostic about whether they want to be part of the left recursive call or the right recursive call.
Thus the existence of an inside player (essentially) implies that this
instance does not have a unique solution (because outside players can go either left or right).
Intuitively this should imply that there is more than one needle in the  haystack under consideration, and thus, intuitively 
this should make the  problem of finding a needle in this haystack easier. 
After more reflection, we eventually observed  that one can slightly strengthen the definition of a late-early piece $\left(y_{p,j}, y_{p,j+1}\right)$, in such a way that
one can still find such a piece using the standard doubling search, and for which it is possible, using only local information, 
 to proceed recursively
on left and right instances.  Although these left and right instances may not satisfy the abundance condition, one can  still 
show that they must admit feasible solutions. 
The type of certificate of feasibility for 
these instances that we will use in our algorithm analysis is what we call the subsistence condition,
which roughly  means that there is 
a player $q$  
that values one  end of the cake so much that cutting a proportionately
fair piece off that end for $q$ leaves a resulting subproblem that has the
abundance condition. 
So intuitively an instance with the subsistence condition is one proportionally fair piece assignment away from becoming abundant.~\footnote{An alternate intuitive interpretation of   the subsistence condition is 
 that  it means that Steinhaus' Last Diminisher algorithm
would produce a feasible solution on the instance. }

In summary, at a high level we initially designed an algorithm that was correct if there was a unique solution.
Then when this algorithm failed, the algorithm learned a certificate
that the instance had multiple solutions. The algorithm then  continued the search recursively on subinstances, 
that plausibly could have been infeasible before learning the certificate, but that the certificate now implies
must have a feasible solution. 
We are not aware of other algorithms in the algorithmic foundations literature that use such a design paradigm,
and it seems at least plausible that this algorithm design paradigm could be useful in other settings.

\section{The Late-Early Algorithm}
\label{section:earlylate}

In \Cref{subsec:generalizedinduction}, we define the generalized inductive
hypothesis / problem that we design an algorithm for, and  the subsistence condition of instances that
is the key to our correctness analysis. 
In \Cref{subsec:description}, we give a  description of our
Late-Early algorithm. 
In \Cref{subsec:implementation}, we explain  some
lower-level implementation details.
In \Cref{subsec:correctness}, we prove the correctness of
our Late-Early algorithm. 
In  \Cref{subsec:complexity}, we do a complexity analysis of
our Late-Early algorithm, and in particular establish that it uses a linear
number of cuts. Finally in \Cref{subsec:incentivization}, we briefly
review the standard way to handle players not answering queries honestly. 

\subsection{Strengthening the Inductive Hypothesis and the Subsistence Condition}
\label{subsec:generalizedinduction}

We strengthen the inductive hypothesis so that the Late-Early algorithm solves the following problem, which we will call the generalized cake cutting problem.
Syntactically an instance $J$ of the generalized cake cutting problem that is given to the algorithm is
a tuple of the form $((c, d), k, n, U, T)$ consisting of:
\begin{itemize}
    \item A subinterval $(c, d)$ of the unit interval
    \item A positive integer $n$ 
    \item  A nonempty collection $T$ of players
    \item  A (possibly empty) collection $U$ of players that is disjoint from $T$
    \item A number of players $k = |T|+|U|$
\end{itemize}

A feasible solution is a proportionally fair assignment, 
which in this context is a partition of the interval $(c, d)$ into $k$ parts, and
an assignment of these parts to the $k$ players $T \cup U$, such that each player values their assigned
part at least $\frac{1}{n}$. 
In the initial call to the Late-Early algorithm,  $(c, d)$ is the whole cake $(0, 1)$, $n$ is the number of players, $k=n$, $T$ is the collection of all players,
and $U$ is the empty set. Note that the algorithm is still not told the underlying value functions $\mu_1, \ldots, \mu_k$ for
the $k$ players. 

Foreshadowing slightly, the algorithm that we design will further guarantee that the instances
of the generalized cake cutting problem satisfy an additional property, that we call the subsistence condition.
At a high level an instance of the generalized cake cutting problem satisfies
the subsistence condition if all of the following three properties hold:
\begin{itemize}
    \item All the players in $T$ value the
portion $(c, d)$ at least $\frac{k}{n}$, the same as they would if the instance satisfied the abundance condition.
\item In contrast, the players in
$U$   need only value the
portion $(c, d)$ at least $\frac{k-1}{n}$, which is slightly less than required by the abundance condition.
\item Most importantly, there is a player $p \in T$ such that if player $p$ was assigned a proportional
piece of value of $\frac{1}{n}$ from the left of the cake, then the remaining instance would satisfy
the abundance condition (and thus have a proportionally fair allocation).
\end{itemize}

We now formally define the subsistence condition, and some notation that we will need in the design and analysis
of our algorithm. A pictorial illustration of these definitions is given in Figure \ref{fig:xyz}.

\begin{figure}[h!]
\begin{center}
\begin{tikzpicture}
\filldraw[black] (0,0) circle (2pt) node[anchor=north]{$c$};
\draw[black, thick] (0,0) -- (2,0);
\filldraw[black] (2,0) circle (2pt) node[anchor=north]{$x_p$};
\filldraw[black] (3,0) circle (2pt) node[anchor=north]{$z_p$};
\filldraw[black] (4,0) circle (2pt) node[anchor=north]{$y_{p,1}$};
\draw[black, thick] (4,0) -- (6,0);
\filldraw[black] (6,0) circle (2pt) node[anchor=north]{$y_{p,2}$};
\draw[black, thick] (6,0) -- (8,0);
\filldraw[black] (8,0) circle (2pt) node[anchor=north]{$y_{p,3}$};
\draw[dotted] (8,0) -- (10,0);
\filldraw[black] (10,0) circle (2pt) node[anchor=north]{$y_{p,k-1}$};
\draw[black, thick] (10,0) -- (12,0);
\filldraw[black] (12,0) circle (2pt) node[anchor=north]{$d$};

\filldraw[black] (0,-1) circle (2pt) node[anchor=north]{$c$};
\filldraw[black] (3.5,-1) circle (2pt) node[anchor=north]{$z_r$};
\filldraw[black] (4.4,-1) circle (2pt) node[anchor=north]{$y_{r,1}$};
\draw[black, thick] (4.5,-1) -- (6.5,-1);
\filldraw[black] (6.5,-1) circle (2pt) node[anchor=north]{$y_{r,2}$};
\draw[black, thick] (6.5,-1) -- (8.5,-1);
\filldraw[black] (8.5,-1) circle (2pt) node[anchor=north]{$y_{r,3}$};
\draw[dotted] (8.5,-1) -- (10.5,-1);
\filldraw[black] (10.5,-1) circle (2pt) node[anchor=north]{$y_{r,k-1}$};
\draw[black, thick] (10.5,-1) -- (12,-1);
\filldraw[black] (12,-1) circle (2pt) node[anchor=north]{$d$};
\end{tikzpicture}
\end{center}
\caption{An example of an instance of the generalized cake cutting problem that satisfies the subsistence condition. 
In particular, this figure illustrates the  relative ordering of the
positions $x_p$, $z_p$ and $y_{p,1}, y_{p,2}, \ldots,  y_{p,k-1}$ for a player $p \in T$, 
the  relative ordering of the
positions $z_r$, and $y_{r,1}, y_{r,2}, \ldots,  y_{r,k-1}$ for a player $r \in U$,
and the relative ordering of $z_p$ and $z_r$. 
Each horizontal line segment is a portion of cake that the player  values at $\frac{1}{n}$.
Note that $r$ may value $(c, y_{r,1})$ at less than $1 / n$, so it could be the case that $x_r > y_{r,1}$ (or not). 
For this reason we are not concerned with the position of $x_r$ for a player $r \in U$ and we do not show it in the figure. 
}
\label{fig:xyz}
\end{figure}

\begin{definition}~
\begin{itemize}
\item For each $p \in T \cup U$, define $x_p$ to be the point in $[c, d]$ where 
$V_p(c, x_p) = \frac{1}{n}$.
 \item For each player $p \in T \cup U$
 and each integer $j \in [1,k-1]$, define $y_{p,j}$ to be the point in $[c, d]$ where
 $V_p(y_{p,j}, d) = \frac{k-j}{n}$.
    \item For an integer $j \in [1, k-1]$, let $\prec_j$ be a total ordering of the $k$ players in
    $T \cup U$ such that 
    $p \prec_j q$ if and only if $y_{p,j} < y_{q,j}$, or     $y_{p,j} = y_{q,j}$ and $p < q$. 
    \item  A player $p$ is early in  $\prec_j$ if it is one of the $j$ smallest  points in $\prec_j$,
    that is if $| \{ q \mid q \prec_j p \}| \le j-1$.   
    \item  Otherwise a player $p$ is late in a $\prec_j$ if it is not one of the $j$ smallest  points in $\prec_j$, that is if $| \{ q \mid q \prec_j p \}| > j-1$.  
\end{itemize}
\end{definition}

\begin{definition}~
An  instance $J=((c,d), k, n, U, T)$ of the generalized cake cutting problem satisfies the subsistence condition if there exists a position $z_p \in [c, d]$ for each player $p \in T \cup U$ where all of the following hold:
\begin{itemize}
    \item For all $p \in T \cup U$ it is the case that $V_p(z_p, d) \ge \frac{k-1}{n}$, or equivalently if $z_p \le y_{p,1}$.
     \item For all $p \in T $ it is the case that $V_p(c, z_p) \ge \frac{1}{n}$, or equivalently if $x_p \le z_p$.
     \item If $U$ is not empty then $\max_{p \in T} z_p \le \min_{p \in U} z_p $. 
\end{itemize}
\end{definition}

Note that if the subsistence condition holds, then $x_p$ and $y_{p, j}$ are well-defined for each $p \in T \cup U$ and $j \in [1, k-1]$.

\subsection{The Algorithm Description}
\label{subsec:description}

At a high level, the Late-Early algorithm first picks two arbitrary players $p$ and $q$
from $T$, and then determines if either $p$ or $q$  values either the left or
right ends of the portion $(c, d)$ of cake more than any other player. 
If so, that player cuts a piece of value $\frac{1}{n}$ off the end that they highly value,
and the rest of the players recurse on the rest of the cake. 
If not, the algorithm uses doubling
binary search to find a $j$ such that both $p$ and $q$ are early in $\prec_{j+1}$
and at least one of $p$ or $q$ is late in $\prec_j$. The algorithm then cuts
the  cake $(c, d)$ at $\max(y_{p,j}$, $y_{q,j})$, splits the players into
a collection of $j$ players who will share the portion 
of the cake to the left of this cut and a collection of $k-j$ players 
who will share the portion to the right of this cut, and then recurses
on these two subproblems.

\begin{algorithm}[H]
\caption{Late-Early }\label{alg:cap}
\begin{algorithmic}[1]
\Require{ An instance $J=((c,d), k, n, U, T)$ of the generalized cake cutting problem satisfying the subsistence condition}
\Ensure{A proportionally fair assignment}
\State Let $p$ be an arbitrary player in $T$
\State If $|T| > 1$ then let $q$ be an arbitrary player in $T$ other than $p$
\State \Comment{First check the base case $k=1$ and whether $p$ and $q$ can safely cut off a piece from the left or right}
\If{ $k =1$}  \Comment{$k = 1$ implies that $T = \{p\}$ and $U = \emptyset$}
\State Assign $p$ the subinterval $(c, d)$ \label{line:base}
\ElsIf{$|T|=1$ or $p $  is early in $\prec_1$} \label{line:T1}
\State Assign $p$ the subinterval $(c, x_p)$
\State Call Late-Early on the instance $J_1 = ((x_p, d), k-1, n,\emptyset, T \cup U-\{p\})$ \label{line:J1}
\ElsIf{$p$ is late in $\prec_{k-1}$}    \Comment{$|T| \geq 2$ for this case and subsequent ones}
\State Assign $p$ the subinterval $(y_{p,k-1}, d)$
\State Call Late-Early on the instance $J_2 = ((c, y_{p,k-1}), k-1, n, U, T-\{p\})$ \label{line:J2}
\ElsIf{$q$ is early in $\prec_1$ }
\State Assign $q$ the subinterval $(c, x_q)$
\State Call Late-Early on the instance $J_3=((x_q, d), k-1, n, \emptyset, T \cup U -\{q\})$ \label{line:J3}
\ElsIf{$q$ is late in $\prec_{k-1}$}
\State Assign $q$ the subinterval $(y_{q,k-1}, d)$
\State Call Late-Early on the instance $J_4=((c, y_{q,k-1}), k-1, n, U, T-\{q\})$ \label{line:J4}
\Else \Comment{Now prepare to recurse on two subproblems}
\State Using doubling search find a $j$ where at least one of $p$ and $q$ is late in $\prec_j$ and 
both $p$ and $q$ are early in $\prec_{j+1}$ \label{line:doubling}
\State If $p \prec_j q$ then rename $p$ to $q$ and vice versa \label{line:renaming}\Comment{We now know $q \prec_j p$}
\State Let $R := \{r \in T \cup U: p \prec_j r \text{ and } p \prec_{j+1} r \}$
\State Let $L := \{r \in T \cup U: r \prec_j p \text{ and } r \prec_{j+1} p \}$
\State Let $O := \{r \in T \cup U: r \prec_j p \text{ and } p \prec_{j+1} r \}$
\State Let $I := \{r \in T \cup U: p \prec_j r \text{ and } r \prec_{j+1} p \}$
\State Let $A$ be an arbitrary subcollection of $j- |L \cup \{q\}| $ players in $O-\{q\}$ \Comment{Note $q \in L \cup O$}
\State Let $B = O-\{q\}-A$
\State  Call Late-Early on the instance $J_5=( (c, y_{p,j}), j, n, U \cap (L \cup A ), T \cap (L \cup A \cup \{q\})    $ \label{line:J5}
\State  Call Late-Early on the instance $J_6= ( (y_{p,j}, d), k-j, n, B \cup R,  I \cup \{p\})$ \label{line:J6}
\EndIf
\end{algorithmic}
\end{algorithm}

\subsection{Implementation Details}
\label{subsec:implementation}

We now describe how to implement some of the steps in the Late-Early algorithm
using cuts and evaluations.
The position $x_p$  can be found 
by the operation $\cut_p(E_p(c) + \frac{1}{n})$. 
A $y_{p, j}$ can be found by the operation $\cut_p(E_p(d) - \frac{k-j}{n})$. 
The position of a player $p$ within a $\prec_j$ is one more than the number
of players $q \ne p$ where $E_q(d) - E_q(y_{p,j}) < \frac{k-j}{n}$ or
$E_q(d) - E_q(y_{p,j}) = \frac{k-j}{n}$ and $q < p$. 
Thus it takes one cut and $2k-2$ evaluations to determine
if a player $p$ is late or early in a $\prec_j$.

We now discuss how to implement the doubling search in line \ref{line:doubling} in the description
of the Late-Early algorithm. For simplicity we will assume $k-1=2^h$ is an integer power
of $2$. Let us define a $ \prec_j$ to be early if both $p$ and $q$ are
early in $\prec_j$, and define $\prec_j$ to be late otherwise. 
We know that right before line \ref{line:doubling}  is the case that $\prec_1$ is late
and $\prec_{k-1}$ is early. Thus we know that there must exist a $j$ such that $\prec_j$ is
late and $\prec_{j+1}$ is early. We can find such a $j$ using a standard doubling search. 
The algorithm iterates over the integers $i = 1, 2, 4, .., 2^{h-1}$.
On iteration $i$ the algorithm tests whether $\prec_i$ and $\prec_{k-1-i}$ are late or early. 
If $\prec_i$ is found to be early then the algorithm sets $a=1$ and $b=i$ and halts iterating over $i$,
else if $\prec_{k-1-i}$ is found to be late 
the algorithm sets $a=i$ and $b=k-1$ and halts iterating over $i$.
Note that as $\prec_{2^{h-1}}$ must be either late or early, $a$ and $b$ will be defined.
The algorithm then does a binary search over the interval $[a, b]$, going left for the
midpoint $m$ it is the case that $\prec_m$ is early, and going right otherwise. 
So to find $j$ requires $O(\log \min(j, k-j))$ cuts and $O(k \log \min(j, k-j))$ evaluations. Note that if $k-1$ is not an integer power of $2$, we iterate over $i$ until $a, b$ are defined or $i \geq k-1-i$, at which point we just test $\prec_{\lfloor(k-1)/2\rfloor}$ and set $a$ or $b$ to $\lfloor(k-1)/2\rfloor$.

\subsection{Correctness Analysis}
\label{subsec:correctness} 

To establish correctness of the Late-Early algorithm it will be sufficient to observe that the algorithm
is correct for instances with the subsistence condition that fall into the base case ($k=1$), and to show by induction that each of the recursive
calls are to instances that satisfy  the subsistence condition.
The base case of the induction is the initial instance. 
We then show that  if instance $J$ has the subsistence condition then each of the  instances $J_1, J_2, J_3, J_4, J_5, J_6$  have the 
subsistence condition if the Late-Early algorithm recurses on that instance. 
But first we  make some useful observations about the relationship
between the abundance condition and the subsistence condition.

\begin{definition}~
An  instance $J=((c,d), k, n, \emptyset, T)$ satisfies the abundance condition if for each player  $p \in T$ it is the case that $V_p(c, d) \ge \frac{k}{n}$. 
\end{definition}

\begin{observation}
\label{obs:subab}
    If an instance $J=((c,d), k, n, \emptyset, T)$ satisfies the abundance condition then it
    satisfies the subsistence condition.
\end{observation}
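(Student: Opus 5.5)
We will exhibit an explicit choice of the positions $z_p$ and check the three clauses of the subsistence condition. Since $U=\emptyset$, the third clause (comparing $\max_{p\in T} z_p$ with $\min_{p\in U} z_p$) holds vacuously. So it suffices to find, for each $p\in T$, a point $z_p\in[c,d]$ with $x_p\le z_p\le y_{p,1}$.

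For each $p\in T$ set $z_p := x_p$, the point with $V_p(c,x_p)=\frac1n$. This point exists and lies in $[c,d]$ because the abundance condition gives $V_p(c,d)\ge \frac kn\ge\frac1n$. The second clause, $V_p(c,z_p)\ge\frac1n$, then holds with equality.

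For the first clause we need $V_p(x_p,d)\ge\frac{k-1}{n}$. By additivity of $V_p$,
\[
V_p(x_p,d)=V_p(c,d)-V_p(c,x_p)\ge \frac kn-\frac1n=\frac{k-1}{n}.
\]
Equivalently $x_p\le y_{p,1}$, because $V_p(y_{p,1},d)=\frac{k-1}{n}$ and $V_p(\cdot,d)$ is nonincreasing in the left endpoint.

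There is no real obstacle here. The only care needed is that $x_p$ and $y_{p,1}$ are well defined. This follows from the standing assumption that value densities are positive on every subinterval, together with $V_p(c,d)\ge\frac kn$. In the degenerate case $k=1$, the positions $y_{p,j}$ are never required, and taking $z_p=x_p$ still satisfies the first clause trivially, since it only asks for $V_p(z_p,d)\ge 0$.
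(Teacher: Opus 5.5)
Your proof is correct and is essentially the paper's argument. The paper takes $z_p = y_{p,1}$, the right end of the window $[x_p, y_{p,1}]$, while you take $z_p = x_p$, the left end; abundance makes this window nonempty. In both versions additivity gives the one remaining inequality, and the third clause is vacuous because $U=\emptyset$. Your choice has the minor advantage that the $k=1$ case needs no interpretation of $y_{p,1}$.
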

\begin{proof}
   It is sufficient to make each $z_p = y_{p,1}$. The player $p$ then values $(y_{p,1}, d)$ at least $\frac{k-1}{n}$,
   and $(c, y_{p,1})$ at least $\frac{1}{n}$. The last criterion of the subsistence condition is vacuously satisfied
   as $U$ is empty. 
\end{proof}

\begin{observation} \label{obs:skinnyleft}
Consider an instance $J=((c,d), k, n, \emptyset, T)$ that satisfies the subsistence condition. Assume there exists a $p \in T$ such that $p$ is early in $\prec_1$. 
Then the instance $((x_p,d), k-1, n, \emptyset, T \cup U - \{p\})$
satisfies the abundance (and hence the subsistence) condition. 
\end{observation}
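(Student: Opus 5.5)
The plan is to verify the abundance condition for the new instance directly. By Observation~\ref{obs:subab}, the subsistence condition then follows. Since $U=\emptyset$, we have $T\cup U-\{p\} = T-\{p\}$. So I must show that every player $r \in T-\{p\}$ values $(x_p,d)$ at least $\frac{k-1}{n}$. I will assume $k\ge 2$, so that the points $y_{r,1}$ are defined; this is the only situation in which the observation is used, since the case $k=1$ is handled separately in line~\ref{line:base}.

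By the definition of $y_{r,1}$, we have $V_r(y_{r,1},d)=\frac{k-1}{n}$. Because valuations are nonnegative, it therefore suffices to show $x_p \le y_{r,1}$ for every $r \ne p$. I will establish this through the chain
\[
x_p \le z_p \le y_{p,1} \le y_{r,1}.
\]
The first two inequalities come directly from the subsistence condition applied to $p$. Here I use that $p\in T$, which holds because $U$ is empty, so the condition $x_p\le z_p$ applies to $p$. The inequality $z_p\le y_{p,1}$ holds for every player.

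The last inequality comes from $p$ being early in $\prec_1$. Being early in $\prec_1$ means that no player precedes $p$ in $\prec_1$, so $p$ is the minimum of $\prec_1$. By the definition of the ordering, this gives $y_{p,1}\le y_{r,1}$ for every $r$, with ties broken by index, which does not affect the non-strict inequality.

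Combining the chain gives $V_r(x_p,d)\ge V_r(y_{r,1},d)=\frac{k-1}{n}$, which is the abundance condition for the instance with $k-1$ players on $(x_p,d)$. Observation~\ref{obs:subab} then yields the subsistence condition. I see no real obstacle in this argument. The only step requiring care is interpreting ``early in $\prec_1$'' as ``minimal in $\prec_1$'' and remembering that $x_p\le z_p$ is guaranteed only for players in $T$, which is why the hypothesis $U=\emptyset$ matters.
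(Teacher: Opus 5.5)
Your proof is correct and follows essentially the same argument as the paper: $x_p \le z_p \le y_{p,1}$ because $p\in T$, and $y_{p,1}\le y_{r,1}$ because $p$ is first in $\prec_1$, so every remaining player values $(x_p,d)$ at least $\frac{k-1}{n}$. One correction to your closing commentary: $p\in T$ is assumed outright, so $U=\emptyset$ is not what makes $x_p\le z_p$ available. In fact your argument works unchanged when $U\neq\emptyset$, since players in $U$ also satisfy $V_r(y_{r,1},d)=\frac{k-1}{n}$, and that is the generality in which the paper invokes this observation for $J_1$ and $J_3$.
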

\begin{proof}
   This follows from the fact that $x_p \le y_{p,1}$, since $p \in T$, 
   and from the fact that $p = \argmin_{r \in T \cup U} y_{r,1}$. 
\end{proof}

\begin{observation}
    The initial instance, in which $(c, d)$ is the whole cake $(0, 1)$, $n$ is the number of players, $k=n$, $T$ is the collection of all the players,
and $U$ is the empty set, satisfies the abundance (and hence the subsistence) condition.  
\end{observation}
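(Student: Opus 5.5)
The initial instance is $J=((0,1), n, n, \emptyset, T)$ with $T$ the set of all $n$ players, so $k=n$ and $U=\emptyset$. The plan is to check the abundance condition directly and then invoke Observation~\ref{obs:subab} to get the subsistence condition.

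For the abundance condition we must show that every $p \in T$ satisfies $V_p(0,1) \ge \frac{k}{n} = 1$. This is immediate from the normalization in the problem statement, $\int_{0}^{1} \mu_p(x)\,dx = 1$, which gives $V_p(0,1)=1=\frac{k}{n}$. Since $U$ is empty, the instance has the syntactic form $((c,d),k,n,\emptyset,T)$ required in the definition of the abundance condition.

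Observation~\ref{obs:subab} then gives the subsistence condition, with witnesses $z_p = y_{p,1}$. These points are well defined: by positivity of values, $V_p(\cdot,1)$ is continuous and strictly decreasing from $1$ to $0$, so a point with $V_p(y_{p,1},1)=\frac{n-1}{n}$ exists.

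The only place one might slip is the degenerate case $n=1$, where $k-1=0$ and $y_{p,1}$ is undefined. Here the plan is to note that the range $j\in[1,k-1]$ is empty, and to take $z_p=1$. Then $V_p(z_p,d)=0\ge\frac{k-1}{n}=0$ and $V_p(c,z_p)=1\ge\frac1n$, so the first two conditions hold, and the third is vacuous because $U$ is empty. I do not expect any real obstacle beyond this bookkeeping.
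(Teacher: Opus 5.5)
Your argument is correct and is exactly the reasoning the paper leaves implicit. The normalization $\int_0^1 \mu_p(x)\,dx = 1$ gives $V_p(0,1) = 1 = \frac{k}{n}$ for every player, and Observation~\ref{obs:subab} then supplies the subsistence condition; treating $n=1$ separately with $z_p = 1$ is sensible extra care, since the witness $z_p = y_{p,1}$ in that observation is undefined when $k=1$.
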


\begin{observation}
    The algorithm Late-Early is correct on instances that satisfy the subsistence condition and have  $k=1$. 
\end{observation}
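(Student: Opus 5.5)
The plan is to unwind the definitions in the case $k=1$ and check that the single assignment on line \ref{line:base} is proportionally fair.

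First I settle the structure of the instance. Since $k = |T|+|U| = 1$ and $T$ is nonempty by the definition of a generalized instance, we have $T=\{p\}$ and $U=\emptyset$. This is exactly the situation the comment on line \ref{line:base} records, and the algorithm assigns the whole interval $(c,d)$ to $p$. So it suffices to show $V_p(c,d)\ge \frac{1}{n}$.

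Next I extract this from the subsistence condition. Let $z_p\in[c,d]$ be the witness position for $p$. Because $p\in T$, the second bullet gives $V_p(c,z_p)\ge \frac{1}{n}$. Since $\mu_p$ is nonnegative and $(c,z_p)\subseteq(c,d)$, monotonicity gives $V_p(c,d)\ge V_p(c,z_p)\ge\frac1n$. The first bullet, $V_p(z_p,d)\ge \frac{k-1}{n}=0$, holds trivially here; adding it would give the same bound $V_p(c,d)\ge\frac{k}{n}=\frac1n$. The third bullet is vacuous because $U=\emptyset$.

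Finally, the returned partition has one part, $(c,d)$, assigned to the one player $p$ with value at least $\frac1n$. That is by definition a feasible solution of the generalized instance, and no recursive calls are made. There is no real obstacle. The only point needing care is noting that $T\neq\emptyset$ forces $T=\{p\}$, rather than $U$ containing the sole player; $x_p$ and $y_{p,j}$ are never needed.
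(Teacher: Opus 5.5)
Your proof is correct and is essentially the paper's one-line argument made explicit. The lone player must lie in $T$, and the second bullet of the subsistence condition together with nonnegativity of $\mu_p$ gives $V_p(c,d)\ge V_p(c,z_p)\ge\frac{1}{n}$.
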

\begin{proof}
The lone player is in $T$, which values $(c, d)$ at least $\frac{1}{n}$ by the subsistence condition.
\end{proof}

\begin{lemma} If the algorithm Late-Early makes a recursive call on the instance
 $J_1$ in line \ref{line:J1}, then $J_1 = ((x_p, d), k-1, n,\emptyset, T \cup U-\{p\})$ is  an instance that satisfies  the abundance (and hence the subsistence) condition. 
\end{lemma}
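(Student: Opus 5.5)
We must show that every player $r \in T \cup U - \{p\}$ values $(x_p, d)$ at least $\frac{k-1}{n}$; by Observation \ref{obs:subab} this gives subsistence as well. The recursive call on $J_1$ happens in two situations at line \ref{line:T1}: either $p$ is early in $\prec_1$, or $|T|=1$. We treat these as two cases.

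\emph{Case 1: $p$ is early in $\prec_1$.} The plan is to argue exactly as in Observation \ref{obs:skinnyleft}, but without assuming $U=\emptyset$. Since $p \in T$, the subsistence condition gives $x_p \le z_p \le y_{p,1}$. Because $p$ is early in $\prec_1$, $p$ is the first player in $\prec_1$, so $y_{p,1} \le y_{r,1}$ for every $r$. Hence $x_p \le y_{r,1}$, and
\[
V_r(x_p,d) \ge V_r(y_{r,1},d) = \tfrac{k-1}{n}.
\]
Membership of $r$ in $T$ or $U$ plays no role here.

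\emph{Case 2: $|T|=1$, so $T=\{p\}$.} Here $p$ need not be first in $\prec_1$, so we instead use the $z$-values and the third subsistence criterion. For $r \in U$ we have $z_p \le z_r$ when $U \neq \emptyset$. Combining this with $x_p \le z_p$ (since $p\in T$) and $z_r \le y_{r,1}$ gives the chain $x_p \le z_p \le z_r \le y_{r,1}$. Therefore
\[
V_r(x_p,d) \ge V_r(y_{r,1},d) = \tfrac{k-1}{n}.
\]
If $U$ is empty then $k=1$, and the base case would have fired instead, so this is vacuous.

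In both cases every remaining player values $(x_p,d)$ at least $\frac{k-1}{n}$. The new instance has $k-1=|T\cup U-\{p\}|$ players, all of which are placed in $T$ with an empty $U$, and $x_p<d$. So $J_1$ satisfies the abundance condition, and by Observation \ref{obs:subab} it also satisfies the subsistence condition.

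The main subtlety is Case 2. There the earliness of $p$ is unavailable, and the third subsistence criterion ($\max_T z \le \min_U z$) is exactly what replaces it. The rest is bookkeeping, including noting that $p\in T$ so that $x_p \le z_p$ applies.
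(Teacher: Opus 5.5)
Your proof is correct and matches the paper's: Case 1 is the argument behind Observation \ref{obs:skinnyleft}, via the chain $x_p \le z_p \le y_{p,1} \le y_{r,1}$. Case 2 uses the third subsistence criterion to get $x_p \le z_p \le z_r \le y_{r,1}$, with $U \neq \emptyset$ because otherwise $k=1$. Redoing Case 1 directly instead of citing the observation is a small but real improvement, since that observation is stated for instances with $U=\emptyset$, while at line \ref{line:T1} the set $U$ may be nonempty.
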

\begin{proof}
    First consider the case that $p$ is early  in $\prec_1$.  Then the claim follows by Observation \ref{obs:skinnyleft}. 
  Next consider the case that $|T| = 1$, so  $T = \{p\}$. Note that it must be the case that $U \neq \emptyset$, since otherwise line \ref{line:T1} of the algorithm is not reached. Therefore, $ T \cup U - \{p\} = U \neq \emptyset$, so $J_1$ is a 
  syntactically valid instance.  Now note that because $J$ satisfies the subsistence condition it must be the case
  that $x_p \le \min_{r \in U} y_{r,1}$. Thus each player in $U$ must value the portion $(x_p, d)$ at least $\frac{k-1}{n}$
  by the subsistence condition. Thus $J_1$ satisfies the abundance condition, and thus by Observation \ref{obs:subab} the subsistence condition.
\end{proof}

\begin{lemma} \label{lem:pskinnyright} If the algorithm Late-Early makes a recursive call on the instance
 $J_2 = ((c, y_{p,k-1}), k-1, n, U, T-\{p\})$ in line \ref{line:J2}, then $J_2$ is an instance that satisfies the subsistence condition. 
\end{lemma}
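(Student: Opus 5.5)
The plan is to reuse the witnesses $z_r$ from the subsistence condition for $J$ and check the three bullets for $J_2$ directly. First note that $J_2$ is syntactically valid: we only reach line~\ref{line:J2} when $|T|\ge 2$, so $T-\{p\}$ is nonempty. Also $|T-\{p\}|+|U| = k-1$. For every $r \in (T-\{p\})\cup U$ I set $z'_r := z_r$. The quantities of $J_2$ are computed relative to the interval $(c, y_{p,k-1})$ with $k-1$ players. Write $x'_r$ and $y'_{r,1}$ for these, so that $V_r(c,x'_r)=\frac1n$ and $V_r(y'_{r,1}, y_{p,k-1}) = \frac{k-2}{n}$.

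The key fact comes from $p$ being late in $\prec_{k-1}$. This means at least $k-1$ players precede $p$ in $\prec_{k-1}$, so every $r \neq p$ satisfies $y_{r,k-1} \le y_{p,k-1}$. Since $V_r(y_{r,k-1},d)=\frac1n$, it follows that
\[
V_r(y_{p,k-1}, d) \le \tfrac{1}{n} \quad \text{for all } r \ne p .
\]
This inequality is the only place the branch condition is used, and it drives the rest of the argument.

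First bullet. For $r \ne p$ the subsistence condition of $J$ gives $V_r(z_r,d)\ge \frac{k-1}{n}$. Combining with the key fact,
\[
V_r(z_r, y_{p,k-1}) = V_r(z_r,d) - V_r(y_{p,k-1},d) \ge \tfrac{k-2}{n}.
\]
This is exactly $z'_r \le y'_{r,1}$. I must also confirm $z_r \in [c, y_{p,k-1}]$. Because $k\ge 2$, we have $V_r(z_r,d)\ge \frac{k-1}{n} \ge \frac1n \ge V_r(y_{p,k-1},d)$. Strict positivity of values on nondegenerate intervals then gives $z_r \le y_{p,k-1}$. The one subtle spot is the equality case $k=2$, where we may get $z_r = y_{p,k-1}$; this is allowed since the interval is closed.

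Second bullet. The left endpoint $c$ is unchanged, so $x'_r = x_r$ for $r\in T-\{p\}$, and $x_r\le z_r$ is inherited from $J$. Here I am assuming $x_r \le y_{p,k-1}$, which follows from $x_r\le z_r\le y_{p,k-1}$.

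Third bullet. If $U\neq\emptyset$, then $\max_{r\in T-\{p\}} z_r \le \max_{r\in T} z_r \le \min_{r\in U} z_r$.

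The first bullet, via the late-in-$\prec_{k-1}$ inequality, is the only nontrivial step; the rest is inheritance. Finally, the subsistence condition implies that all $x'_r$ and $y'_{r,j}$ are well defined in $J_2$.
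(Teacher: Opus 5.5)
Your proof is correct and follows the paper's argument: you keep the witnesses $z_r$ from $J$, use the fact that $p$ late in $\prec_{k-1}$ forces $y_{r,k-1} \le y_{p,k-1}$ for all $r$ to get $V_r(z_r, y_{p,k-1}) \ge \frac{k-2}{n}$, and then inherit the second and third criteria because no cake is removed from the left and $T$ only shrinks. Your extra checks that $z_r$ and $x_r$ lie in $[c, y_{p,k-1}]$ fill in well-definedness details that the paper leaves implicit.
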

\begin{proof}
  First observe that the set $T$ in $J$ has cardinality at least two if a recursive
  call was made on $J_2$. Thus  $T-\{p\}$ is not empty, and thus $J_2$ is a syntactically correct instance. 
  We now 
   show that $J_2$ satisfies the subsistence condition. To accomplish this it is sufficient to make each $z_r$ in instance $J_2$ the
  same as it was in $J$. The first  criterion of the subsistence condition for $J_2$ holds because $p = \argmax_{r \in T \cup U} y_{r, k-1}$; therefore, 
  $V_r(z_r, y_{p, k-1}) \geq V_r(z_r, y_{r, k-1}) = V_r(z_r, d) - V_r(y_{r, k-1}, d) \geq \frac{k-2}{n}$. The second criterion of the subsistence condition for $J_2$ holds because these
  $z_r$ satisfied the second criterion of the subsistence condition for $J$, and no cake was
  cut from the left. 
  The last criterion of the subsistence condition for $J_2$ holds because, when going from $J$ to $J_2$,
   the value of $\max_{p \in T} z_p$
  cannot increase, and the value $\min_{p \in U} z_p $ will not change. 
\end{proof}

\begin{lemma} If the algorithm Late-Early makes a recursive call on the instance
$J_3=((x_q, d), k-1, n, \emptyset, T \cup U -\{q\})$ in line \ref{line:J3}, then $J_3$ is an instance that satisfies  the abundance (and hence the subsistence) condition. 
\end{lemma}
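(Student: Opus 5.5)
We reach line \ref{line:J3} only if the earlier branches failed. So $k \ge 2$, $|T| \ge 2$, $p$ is late in $\prec_1$, $p$ is early in $\prec_{k-1}$, and $q \in T$ is early in $\prec_1$. The plan is to mirror the proof of the lemma for $J_1$, using Observation \ref{obs:skinnyleft} as the main tool. That observation is stated for instances with $U=\emptyset$, but its proof never uses the emptiness of $U$. It only uses two facts: $x_q \le y_{q,1}$, because $q \in T$ and $J$ satisfies the subsistence condition (we have $x_q \le z_q \le y_{q,1}$); and $y_{q,1} = \min_{r \in T\cup U} y_{r,1}$, because $q$ is early in $\prec_1$, i.e.\ $q$ is the first player in $\prec_1$. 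So I will either cite Observation \ref{obs:skinnyleft} directly, noting that it applies verbatim with a possibly nonempty $U$, or simply repeat its short argument.

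The steps are as follows. First, check syntactic validity. Since $|T| \ge 2$ and $q \in T$, the set $T \cup U - \{q\}$ is nonempty, and $k-1 = |T \cup U - \{q\}|$, so $J_3$ is a well-formed instance with empty $U$-set. Second, chain the inequalities. For every $r \in T \cup U - \{q\}$ we have $x_q \le y_{q,1} \le y_{r,1}$, and therefore
\[
V_r(x_q, d) \ge V_r(y_{r,1}, d) = \frac{k-1}{n}.
\]
This holds uniformly for players in both $T$ and $U$, since $y_{r,1}$ is well-defined for all of them under the subsistence condition. Third, conclude that $J_3$ satisfies the abundance condition, and hence, by Observation \ref{obs:subab}, the subsistence condition.

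The only subtle point is the boundary case of ties in $\prec_1$: ``early in $\prec_1$'' means having no predecessors in $\prec_1$, and this gives $y_{q,1} \le y_{r,1}$ even when tie-breaking by index is involved. I expect no real obstacle here. The lemma is essentially identical to the $J_1$ case with $q$ in place of $p$, and the main care needed is confirming that Observation \ref{obs:skinnyleft} does not actually require $U=\emptyset$.
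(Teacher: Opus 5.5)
Your proof is correct and follows the paper's, which notes that $|T|>1$ (so $q$ exists) and then cites Observation~\ref{obs:skinnyleft}. Your remark that this observation's argument never uses $U=\emptyset$ is worthwhile rigor, since the paper silently applies it to an instance whose $U$ may be nonempty; the argument only needs $x_q \le z_q \le y_{q,1}$ and the minimality of $y_{q,1}$ in $\prec_1$.
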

\begin{proof} Note that if a recursive call to $J_3$ is made it must be
the case that $|T| >1$, and thus $q$ exists. 
   Then this is an immediate consequence of Observation \ref{obs:skinnyleft}.
\end{proof}

\begin{lemma} If the algorithm Late-Early makes a recursive call on the instance
 $J_4$ in line \ref{line:J4}, then $J_4=((c, y_{q,k-1}), k-1, n, U, T-\{q\})$ is an instance that satisfies the subsistence condition. 
\end{lemma}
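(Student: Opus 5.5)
The plan is to mirror the proof of Lemma~\ref{lem:pskinnyright}, with $q$ in place of $p$. First, syntactic validity. The call on $J_4$ is reached only if the earlier tests failed, and in particular $|T|\ge 2$, so $q$ exists. To see that $T-\{q\}$ is nonempty, note that $p\in T-\{q\}$. Hence $J_4=((c,y_{q,k-1}),k-1,n,U,T-\{q\})$ is a syntactically correct instance with $k-1 = |T-\{q\}|+|U|$.

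For the subsistence condition, I will keep every $z_r$ for $r\in T\cup U-\{q\}$ equal to its value in $J$, and then check the three criteria. The key fact is that $q$ is late in $\prec_{k-1}$. Since $k-1$ is the number of early positions, $q$ is the unique late player, so $q=\argmax_{r\in T\cup U} y_{r,k-1}$ (with the tie-breaking of $\prec_{k-1}$). Thus $y_{r,k-1}\le y_{q,k-1}$ for every $r$.

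The first criterion, computed on the new cake $(c,y_{q,k-1})$ with $k-1$ players, requires $V_r(z_r,y_{q,k-1})\ge \frac{k-2}{n}$. This follows from
\[
V_r(z_r,y_{q,k-1})\ge V_r(z_r,y_{r,k-1})=V_r(z_r,d)-V_r(y_{r,k-1},d)\ge \frac{k-1}{n}-\frac{1}{n}.
\]
The first inequality uses $z_r\le y_{r,1}\le y_{r,k-1}\le y_{q,k-1}$. The second criterion, $V_r(c,z_r)\ge\frac1n$ for $r\in T-\{q\}$, is unchanged because the left endpoint $c$ is unchanged. For that value to be meaningful we also need $z_r$ to lie in the new interval, that is $z_r\le y_{q,k-1}$, and this was just shown. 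For the third criterion, removing $q$ from $T$ can only decrease $\max_{r\in T} z_r$, while $\min_{r\in U} z_r$ is unchanged since $U$ is unchanged. So the inequality persists.

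The only step needing care is confirming that the maximality argument for $q$ is legitimate even though $p$ was tested first. I do not expect this to be a real obstacle. Lemma~\ref{lem:pskinnyright} never used anything about $p$ beyond its being late in $\prec_{k-1}$ and lying in $T$, and both hold here for $q$. So the argument transfers verbatim.
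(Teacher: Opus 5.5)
Your proposal is correct and takes essentially the paper's approach: the paper proves this lemma simply by saying it is the same as the proof of Lemma~\ref{lem:pskinnyright}, and that is exactly the argument you transfer with $q$ in place of $p$ (keep each $z_r$, use that $q$ is the unique maximum of $\prec_{k-1}$, then check the three criteria). You also make explicit two small points the paper leaves implicit: $p \in T-\{q\}$ ensures nonemptiness, and each $z_r$ lies in the new interval $[c, y_{q,k-1}]$.
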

\begin{proof} The proof is the same as the proof of Lemma \ref{lem:pskinnyright}.
\end{proof}

\begin{figure}[h]
\begin{center}
\begin{tikzpicture}
\filldraw[black] (-2,0) circle (2pt) node[anchor=north]{$y_{q,j}$};
\filldraw[black] (0,0) circle (2pt) node[anchor=north]{$y_{p,j}$};
\draw[black, thick] (0,0) -- (6,0);
\filldraw[black] (6,0) circle (2pt) node[anchor=north]{$y_{p,j+1}$};
\filldraw[black] (-2,-1) circle (2pt) node[anchor=east]{$L$};
\draw[black, thick] (-2,-1) -- (4,-1);
\filldraw[black] (4,-1) circle (2pt) ;
\filldraw[black] (2,-2) circle (2pt) node[anchor=east]{$R$};
\draw[black, thick] (2,-2) -- (8,-2);
\filldraw[black] (8,-2) circle (2pt) ;
\filldraw[black] (2,-3) circle (2pt) node[anchor=east]{$I$};
\draw[black, thick] (2,-3) -- (4,-3);
\filldraw[black] (4,-3) circle (2pt) ;
\filldraw[black] (-2,-4) circle (2pt) node[anchor=east]{$O$};
\draw[black, thick] (-2,-4) -- (8,-4);
\filldraw[black] (8,-4) circle (2pt) ;
\end{tikzpicture}
\end{center}
\caption{The relative positions of $y_{q,j}$, $y_{p,j}$ and $y_{p,j+1}$ and the positions of intervals of the form $(y_{r, j}, y_{r,j+1})$ for $r$ in $L$, $R$, $I$ and $O$, relative to
the interval $(y_{p, j}, y_{p,j+1})$. Note the ends of some pair from $L$, $R$, $I$ and $O$  being vertically aligned signifies that no particular ordering of these ends are implied by the definition.}
\label{fig:simplelateearly}
\end{figure}

\begin{lemma} \label{lem:J5} If the algorithm Late-Early makes a recursive call on the instance
 $J_5=( (c, y_{p,j}), j, n, U \cap (L \cup A ), T \cap (L \cup A \cup \{q\})    $ in line \ref{line:J5}, then $J_5$ satisfies  the subsistence condition. 
\end{lemma}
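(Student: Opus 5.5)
The plan is to check the syntactic requirements of $J_5$ first, and then to verify the three criteria of the subsistence condition by reusing, for every player of $J_5$, the same witness $z_r$ that the player had in $J$. Throughout I write $k'=j$ and $d'=y_{p,j}$ for the new parameters. The players of $J_5$ are $L\cup A\cup\{q\}$. Every one of them satisfies $r\prec_j p$, so $y_{r,j}\le y_{p,j}$.

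\emph{Syntactic step.} After the renaming in line \ref{line:renaming}, we have $q\prec_j p$. Since at least one of $p,q$ is late in $\prec_j$, it is $p$ that is late. Hence at least $j$ players precede $p$ in $\prec_j$. These predecessors are exactly $L\cup O$, so $|L\cup O|\ge j$, and this makes $A$ large enough. For the other direction:
\begin{itemize}
\item Every player of $L$ precedes $p$ in $\prec_{j+1}$, and $p$ is early in $\prec_{j+1}$, so $|L|\le j$.
\item If $q\in O$, then $L\cup\{p\}$ all precede $q$ in $\prec_{j+1}$. Since $q$ is early in $\prec_{j+1}$, this gives $|L|\le j-1$.
\end{itemize}
In either case $|L\cup\{q\}|\le j$, so $A$ can be chosen and $J_5$ has exactly $j$ players. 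The set $T'=T\cap(L\cup A\cup\{q\})$ contains $q\in T$, so it is nonempty.

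\emph{Subsistence step.} I keep $z_r$ from $J$ for each $r$ in $J_5$, so I must check $z_r\in[c,d']$. From $J$ we have $z_r\le y_{r,1}\le y_{r,j}\le y_{p,j}$.

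For the first criterion, the key computation is
$$V_r(z_r,y_{p,j})\ge V_r(z_r,y_{r,j}) = V_r(z_r,d)-\tfrac{k-j}{n}\ge \tfrac{k-1}{n}-\tfrac{k-j}{n}=\tfrac{j-1}{n}.$$
This is exactly the first criterion for $J_5$, which has $k'=j$ players.

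The second criterion holds unchanged for players in $T'\subseteq T$, because the left endpoint $c$ is unchanged. For the third criterion, $T'\subseteq T$ and $U'\subseteq U$, so
$$\max_{T'}z\le\max_T z\le\min_U z\le\min_{U'} z.$$

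\emph{Expected obstacle.} The only delicate part is the counting in the syntactic step, in particular $|L\cup\{q\}|\le j$. This needs the case split on whether $q\in L$ or $q\in O$, and it depends on tie-breaking in the orders $\prec_j$ and $\prec_{j+1}$ being handled consistently. I would write out that case split carefully. The subsistence inequalities themselves are routine.
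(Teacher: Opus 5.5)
Your proof is correct and follows the paper's argument: the same counting from $p$ being late in $\prec_j$ and $p,q$ being early in $\prec_{j+1}$, the same reuse of the witnesses $z_r$, and the same chain $V_r(z_r,y_{p,j})\ge V_r(z_r,y_{r,j})\ge \frac{j-1}{n}$. Your case split on $q\in O$ (where $L\cup\{p\}$ then precede $q$ in $\prec_{j+1}$) is the contrapositive of the paper's split on $|L|=j$ forcing $q\in L$, and your check that $z_r\le y_{r,1}\le y_{r,j}\le y_{p,j}$ supplies a well-definedness detail the paper leaves implicit.
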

\begin{proof} 
To aid in our exposition we will use $\hat T$ and $\hat U$ for
the values of $T$ and $U$ in instance $J_5$, that is $\hat U = U \cap (L \cup A )$ and $\hat T = T \cap (L \cup A \cup \{q\})$,
and we will use $\hat x_r$ and $\hat z_r$ for the new values of $x_r$ and $z_r$ in $J_5$. The reader should also reference \Cref{fig:simplelateearly} as a helpful visual aid. We first need to establish that $J_5$ is a syntactically correct input by demonstrating that  there are $j$ players,
and that $\hat T$ is not empty. We then need to establish that
$J_5$ satisfies the subsistence condition. 

The set  $\hat T$ is not empty because it contains $q$. 
Let us now turn to counting the number of players in the instance $J_5$.
    First note that the cardinality of $L \cup I$ is at most $j$. This is because  $p$ is early in $\prec_{j+1}$ and every element in $L \cup I$ precedes $p$ in $\prec_{j+1}$.
Then note that the cardinality of $L \cup \{q\}$ is at most $j$.
To see this first note that if $|L| < j$, then clearly the cardinality of $L \cup \{q\}$ is at most $j$. Else consider the case that $|L| = j$.
Then it  must be that $I$ is empty. Thus $y_{p, j+1}$ must be exactly the $(j+1)$'st smallest 
position in $\prec_j$, and the $j$ positions before $y_{p, j+1}$ in $\prec_{j+1}$ are the positions $y_{r, j+1}$ associated with the $r \in L$. But then as $q$ is early in $\prec_{j+1}$ 
it must be the case that $q \in L$. And thus we can conclude that the cardinality of $L \cup \{q\}$ is at most $j$.
Since $p$ is late in $\prec_j$, the cardinality of $O$ must be at least $j - |L|$. Thus 
there are enough players in $O-\{q\}$ to form the set $A$ of $j - | L \cup \{q\}|$ players. 
Thus we can conclude that $\hat T \cup \hat U$ contains $j$ players in the two cases $q \in L$ or $q \in O$.

Now we need to establish that instance $J_5$ satisfies the subsistence condition. 
To accomplish this we will set each $\hat z_r$ to $z_r$ for each $t \in \hat T \cup \hat U$.
We now consider the three criteria for the subsistence condition.
In this setting the first criterion of the subsistence condition is that each player $r$ values $(\hat z_r, y_{p, j})$ at least $\frac{j-1}{n}$. Note that each player $r$ is in $L \cup A \cup \{q\} \subseteq L \cup O$. Thus

\[ V_r(\hat z_r, y_{p, j}) \geq V_r(\hat z_r, y_{r, j}) = V_r(\hat z_r, d) - V_r(y_{r, j}, d) = V_r(z_r, d) - V_r(y_{r, j}, d) \geq \frac{k-1}{n} - \frac{k-j}{n} = \frac{j-1}{n} \]

The first inequality follows by the fact that $r \in L \cup O$, so $r \prec_j p$ and thus $y_{r, j} \leq y_{p, j}$. We then use the definitions of $\hat z_r, z_r,$ and $y_{r, j}$. We now turn to the second criterion of the subsistence condition, which in this setting is that
$\hat x_r \le \hat z_r$ for all $r \in \hat T$. 
This second criterion holds because
$\hat T \subseteq T$, and this criterion was true for $r \in T$, and no cake was cut from the left end of the interval $(c, d)$ so $\hat x_r = x_r$ for all $r \in \hat T$.
We now turn to the third criterion of the subsistence condition, which in this setting is that
if $\hat U$ is not empty then $\max_{p \in \hat T} \hat z_p \le \min_{p \in \hat U} \hat z_p $. 
This  holds because $\max_{r \in \hat T} \hat z_r \le \max_{r \in  T} z_r \le \min_{r \in U} z_r \le \min_{r \in \hat U } \hat z_r$
since  each $\hat z_r= z_r$, $\hat T \subseteq T$, and $\hat U \subseteq U$.
\end{proof}

\begin{lemma} If the algorithm Late-Early makes a recursive call on the instance
 $J_6= ( (y_{p,j}, d), k-j, n, B \cup R,  I \cup \{p\})$  in line \ref{line:J6}, then $J_6$ is  an instance that satisfies   the subsistence condition. 
\end{lemma}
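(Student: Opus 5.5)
The plan is to verify the two syntactic requirements first and then build a \emph{fresh} subsistence certificate for $J_6$, rather than reusing the old $z_r$ as in Lemma \ref{lem:J5}. The old $z_r$ cannot be reused because players of $I$ that were in $U$ are promoted to $\hat T = I \cup \{p\}$ in $J_6$, so the old certificate gives no left-side guarantee for them. The new certificate will be $\hat z_r := y_{r,j+1}$ for every player $r$ of $J_6$. Throughout I use that $r \prec_i s$ implies $y_{r,i} \le y_{s,i}$ (ties are broken by index, but the positions are still ordered).

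\textbf{Syntax.} Every player other than $p$ lies in exactly one of $L,R,O,I$, since both $\prec_j$ and $\prec_{j+1}$ are total orders. The players of $J_5$ are $L \cup A \cup \{q\}$, and Lemma \ref{lem:J5} shows there are exactly $j$ of them. The players of $J_6$ are $R \cup I \cup B \cup \{p\}$, where $B = O - \{q\} - A$. These two collections partition $T \cup U$, so $J_6$ has $k-j$ players. The set $\hat T = I \cup \{p\}$ is nonempty because it contains $p$, and it is disjoint from $\hat U = B \cup R$. The doubling search returns $j \le k-2$, so $k-j \ge 2$.

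\textbf{Certificate.} In $J_6$ the cake is $(y_{p,j}, d)$ with $k-j$ players. The new first-cut position $\hat y_{r,1}$ satisfies $V_r(\hat y_{r,1}, d) = \frac{k-j-1}{n}$, so $\hat y_{r,1} = y_{r,j+1}$, provided $y_{r,j+1} \ge y_{p,j}$. This holds for every player of $J_6$:
\begin{itemize}
\item for $r \in I$, we have $y_{r,j+1} \ge y_{r,j} \ge y_{p,j}$, using $p \prec_j r$;
\item for $r \in R \cup B \subseteq R \cup O$, we have $p \prec_{j+1} r$, so $y_{r,j+1} \ge y_{p,j+1} \ge y_{p,j}$.
\end{itemize}
Hence $\hat z_r = y_{r,j+1}$ lies in $[y_{p,j}, d]$. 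The first criterion, $\hat z_r \le \hat y_{r,1}$, then holds with equality.

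For the second criterion, take $r \in I \cup \{p\}$. Then $y_{p,j} \le y_{r,j}$, with equality if $r = p$, and so
\[ V_r(y_{p,j}, \hat z_r) \ge V_r(y_{r,j}, y_{r,j+1}) = \tfrac{1}{n}. \]
For the third criterion, every $r \in I$ satisfies $r \prec_{j+1} p$, so $\max_{r \in \hat T} \hat z_r = y_{p,j+1}$. Every $r \in \hat U = B \cup R$ satisfies $p \prec_{j+1} r$, so $\min_{r \in \hat U} \hat z_r \ge y_{p,j+1}$. The third criterion therefore holds.

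The main obstacle is noticing that the certificate must be rebuilt, because of the promotion of $U$-players in $I$ into $\hat T$. Once $\hat z_r = y_{r,j+1}$ is chosen, the key point is that the $I$/$R$/$O$ classification by $\prec_j$ and $\prec_{j+1}$ gives exactly the two facts needed: $I$-players start their $j$-th slice after $y_{p,j}$, and $R \cup O$ players end their $(j{+}1)$-st slice after $p$ does. After that, each criterion is a one-line check.
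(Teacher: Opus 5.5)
Your proposal is correct and matches the paper's proof essentially step for step. You use the same fresh certificate $\hat z_r = y_{r,j+1}$, the same partition argument giving $k-j$ players, and the same one-line checks of all three criteria via the $I$/$R$/$O$ classification under $\prec_j$ and $\prec_{j+1}$.
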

\begin{proof} 
To aid in our exposition we will use $\hat T$ and $\hat U$ for
the values of $T$ and $U$ in instance $J_6$, that is $\hat U = B \cup R$ and $\hat T =  I \cup \{p\}$,
and we will use $\hat x_r$ and $\hat z_r$ for the new values of $x_r$ and  $z_r$ in $J_6$. The reader should also reference \Cref{fig:simplelateearly} as a helpful visual aid.
We first need to establish that $J_6$ is a syntactically correct input by demonstrating that  there are $k-j$ players,
and $\hat T$ is not empty. We then need to establish that
$J_6$ satisfies the subsistence condition.

The set $\hat T$ is not empty because it contains $p$. 
The number of players  in $J_6$ must be $k-j$ as instance $J$ contained $k$ players,
it was established in the proof of Lemma \ref{lem:J5} that the instance $J_5$ contains $j$
players, and all players in instance $J$ were assigned to either instance $J_5$ or $J_6$. 
To see this note that player $q$ and all players in $A \cup L$ were assigned to instance $J_5$,
the players in instance $J_6$ are player $p$ and the players in $B=O-A$, $I$ and $R$,
and $I \cup O \cup L \cup R$ covers all players in $J$ besides $p$.

Now we need to establish that instance $J_6$ satisfies the subsistence condition. 
To accomplish this we set $\hat z_r = y_{r,j+1}$ for each player $r \in \hat T \cup \hat U$.
We first need to verify that $\hat z_r$ is well-defined, that is $\hat z_r \in [y_{p,j}, d]$ for 
$r \in \{p\} \cup R \cup I \cup B$. 
For $r=p$ it is the case that $\hat z_r$ is well-defined since $y_{p,j} < y_{p,j+1}$.
For $r \in O \cup R$ it is the case that $\hat z_r$ is well-defined because $y_{p,j+1} \le y_{r,j+1}$ by the
definition of $O$ and $R$.
For $r \in I$ it is the case that $\hat z_r$ is well-defined because $y_{p,j} \le y_{r,j} < y_{r, j+1}$ by the
definition of $I$.

We now consider the three criteria for the subsistence condition.
In this setting the first criterion of the subsistence condition is that each player $r$ values $(\hat z_r, d)$ at least $\frac{k-j-1}{n}$. This directly follows from the definition of  $y_{r,j+1}$.

We now turn to the second criterion of the subsistence condition, which in this setting is that
$\hat x_r \le \hat z_r$ for all $r \in \hat T$. 
We now consider various cases for $r$. 
The  second criterion holds for $r=p$  as $p$ values the portion $(y_{p,j}, y_{p,j+1})$ at $\frac{1}{n}$,
and thus $\hat x_p = \hat z_p= y_{p, j+1}$. 
Next consider the case that $r \in I$.
Because $p \prec_{j} r$ for all $r \in I$, we have $V_r(y_{p, j}, y_{r, j+1}) \geq V_r(y_{r, j}, y_{r, j+1}) = \frac{1}{n}$, which implies that $\hat x_r \le y_{r, j+1} = \hat z_r$ as desired. 

We now turn to the third criterion of the subsistence condition, which in this setting is that
if $\hat U$ is not empty then $\max_{p \in \hat T} \hat z_p \le \min_{p \in \hat U} \hat z_p $. 
This criterion holds as $\max_{r \in I \cup \{p\}} y_{r,j+1} \le y_{p, j+1} \le \min_{r \in O \cup R} y_{r,j+1}$ by the definition of $I$, $R$ and $O$. 
\end{proof}

\subsection{Complexity Analysis}
\label{subsec:complexity}

As finding a $j$ where $\prec_j$ is late and $\prec_{j+1}$ is early takes
$O(\log \min(j, n-j))$ cuts, we
can bound the number of cuts made by our Late-Early algorithm  by the recurrence relation:
$$C(n) \le \max_{j \in [1, n-1] } ( C(j) + C(n-j) + O(\log \min(j, n-j))$$
One can then prove by induction that 
$C(n) \le a n - a \log n - a$ for some sufficiently large constant $a$.
As finding a $j$ where $\prec_j$ is late and $\prec_{j+1}$ is early takes
$O(n\log \min(j, n-j))$  evaluations, we
can bound the number of evaluations made by our Late-Early algorithm by the recurrence relation:
$$E(n) \le \max_{j \in [1, n-1] } ( E(j) + E(n-j) + O(n \log  \min(j, n-j))$$
Similarly, one can then prove by induction that $E(n) \leq an^2$ for some sufficiently large constant $a$. Thus $C(n) = O(n)$ and $E(n) = O(n^2)$.

\subsection{Incentivization}
\label{subsec:incentivization}

Note that the most common  way  that the cake cutting problem is defined
does not require that
the players answer the queries honestly. But then  the algorithm is only required
to 
assign a proportionally fair piece to each player that always answered truthfully.
However, one can always convert an algorithm that assumes that players answer honestly
to one that assigns proportionally fair pieces to truthful players in the following standard way.
As long as each player $p$ answers in such a way that there is a $\mu_p$ that
is consistent with their answers, then the algorithm's assignment will be proportionally
fair to each player that answered honestly. If at some point a player $p$ answers
a query in such a way that there is no $\mu_p$ consistent with the answers up to this point,
then the algorithm can let $\hat \mu_p$ be an arbitrary value function that is consistent
with all of $p$'s answers previously, and then the algorithm can use $\hat \mu_p$ for this query 
and all subsequent queries to player $p$. Again this will be  proportionally
fair to each player that answered honestly.

\section{ Other Related Work}
\label{sec:nonessential}

Because the cake cutting literature is vast, 
we cannot hope to summarize all of it. Instead, we will just give a small sample that hopefully gives the flavor of the vast literature. 
Reasonable jumping off points for further investigation are~\cite{RobertsonWebbbook,Procaccia_2016,Brams_Taylor_1996,enwiki:1352137217}.

There is literature on approximate  proportional fairness, which in the context of proportional
fairness would mean that each player need only be given a piece that they
value at least $\frac{1}{cn}$ for some $c > 1$. 
For example, 
 \cite{JiriGerhardApprox} gives a deterministic algorithm with a linear number of cuts that achieves
constant approximate fairness, and \cite{EdmondsP06} gives a randomized algorithm with a linear number of queries that achieves
constant approximate fairness.

The other major criterion for fairness in the literature is envy-freeness,
which means that each player values their piece at least as much as they
value the piece that any other player received. 
There is an $\Omega(n^2)$ lower bound on the number of queries needed
for a deterministic algorithm to achieve envy-freeness~\cite{procaccia}.
The best upper bound on the number of queries is super-exponential~\cite{aziz-mackenzie}.

There is an algorithm that uses $O(n^3)$ cuts in the setting that a proportionally fair division
has to be maintained as the players arrive online over time~\cite{fink1964note}.

There are algorithms that guarantee strong proportional fairness, that is each
player gets a piece they value strictly more than $\frac{1}{n}$, on instances
where the players disagree about the value of portions of the cake~\cite{woodall1986note,enwiki:1324566255}.

There are continuous, moving-knife, algorithms that do not fit into the 
Robertson-Webb model~\cite{enwiki:1326284156}. For example, 
Stromquist~\cite{stromquist1980how} gives a moving-knife implementation of
the Last Diminisher algorithm, in which each player is expected to signal when
their value reaches $\frac{1}{n}$ as the algorithm continuously moves a knife
from one end of the cake to the other. 

There is a large literature on fairly dividing discrete items, see for example \cite{enwiki:1346798097}.

\section{Conclusion}

The title of this paper, ``No, Cake Cutting Really is a Piece of Cake'' is in direct response to the paper of Edmonds and Pruhs~\cite{EdmondsPruhs}, titled, ``Cake Cutting Really is Not a Piece of Cake'', 
which
showed an $\Omega(n \log n)$ lower bound on the number of cut queries plus the number of evaluation
queries for any deterministic algorithm. So under the interpretation of ``piece of cake'' as $O(n)$  queries,
one intuitive takeaway from these two papers is whether or not cake cutting is ``a piece of cake'' for
deterministic algorithms
depends on whether or not one counts evaluation queries. 

We believe the two most important/interesting remaining open algorithmic questions related to proportionally fair cake cutting are the following.

\paragraph{Open Question 1:} What is the optimal expected number of cut queries plus
evaluation queries to achieve proportional fairness for a randomized algorithm? 
In particular, is there a randomized algorithm that uses $O(n)$ cut queries
plus evaluation queries in expectation? Or is there a $\Omega(n \log n)$ lower bound 
on the expected number of cut queries plus evaluation queries for a randomized algorithm?
The $\Omega(n \log n)$ lower bound for deterministic algorithms that
assign subintervals as pieces can be extended to randomized algorithms~\cite{EdmondsPruhs,arndt2026omeganlognrandomized}. But the extension of
the deterministic  lower bound for general pieces  in \cite{EdmondsPruhs} does not seem to readily extend to   randomized algorithms. 

\paragraph{Open Question 2:} For deterministic algorithms, what is the optimal tradeoff between the number of
cut queries and the number of evaluation queries, when the number of cut queries
is in the range from $\Omega(n)$ to $O(n \log n)$? In particular,
when the number of cut queries is $O(n)$, can the number of evaluation
queries be $O(n \log n)$? If so, then this would arguably close the
story on deterministic algorithms for proportional fairness. Or alternatively, when the number of cut queries is $O(n)$, must the number of evaluation
queries be $\Omega(n^2)$?


\paragraph{Acknowledgments:} The last author is grateful to Jeff Edmonds, Jiri Sgall,
and the late, great Gerhard Woeginger, for many stimulating conversations about cake cutting, fair division, and  fun with algorithms.

\bibliographystyle{alpha}
\bibliography{cakereferences}

\end{document}